\definecolor{mygreen}{RGB}{28,172,0} % color values Red, Green, Blue
\definecolor{mylilas}{RGB}{170,55,241}
\newtheorem{theorem}{\textbf{Theorem}}%[section]
\newtheorem{note}{\textbf{Note}}
\algrenewcommand\algorithmicindent{0.7em}
\DeclareMathOperator{\diag}{diag}
\def\BibTeX{{\rm B\kern-.05em{\sc i\kern-.025em b}\kern-.08em
    T\kern-.1667em\lower.7ex\hbox{E}\kern-.125emX}}
\begin{document}

\title{Gaussian Integral based Bayesian Smoother
%Rauch-Tung-Striebel Smoother using Gaussian Integral\\
%Smoothing using Gaussian Integral\\
}

\author{\IEEEauthorblockN{ Rohit Kumar Singh}
\IEEEauthorblockA{\textit{Department of Electrical Engineering} \\
\textit{Indian Institute of Technology Patna}\\
Bihar, India \\
rohit\_1921ee19@iitp.ac.in}
\and
\IEEEauthorblockN{ Kundan Kumar}
\IEEEauthorblockA{\textit{Department of Electrical Engineering}\\\textit{and Automation}\\
\textit{Aalto University, Finalnd}\\
kundan.kumar@aalto.fi}
\and
\IEEEauthorblockN{ Shovan Bhaumik}
\IEEEauthorblockA{\textit{Department of Electrical Engineering} \\
\textit{Indian Institute of Technology Patna}\\
Bihar, India \\
shovan.bhaumik@iitp.ac.in}

}

\maketitle

\begin{abstract}
This work introduces the Gaussian integration to address a smoothing problem of a nonlinear stochastic state space model. The probability densities of states at each time instant are assumed to be Gaussian, and their means and covariances are evaluated by utilizing the odd-even properties of Gaussian integral, which are further utilized to realize Rauch-Tung-Striebel (RTS) smoothing expressions. Given that the Gaussian integration provides an exact solution for the integral of a polynomial function over a Gaussian probability density function, it is anticipated to provide more accurate results than other existing Gaussian approximation-based smoothers such as extended Kalman, cubature Kalman, and unscented Kalman smoothers, especially when polynomial types of nonlinearity are present in the state space models. The developed smoothing algorithm is applied to the Van der Pol oscillator, where the nonlinearity associated with their dynamics is represented using polynomial functions. Simulation results are provided to demonstrate the superiority of the proposed algorithm.
\end{abstract}

\section{Introduction}
In this paper, we developed an approximated Gaussian integral-based smoothing algorithm for the nonlinear stochastic state space models. The Smoothing algorithms find application in various areas like target tracking \cite{xia2022multiple}, navigation \cite{cao2022gvins}, robotics \cite{garcia2019rao}, image analysis \cite{ritter2021data}, statistical modelling \cite{munezero2022efficient}, and finances \cite{zhang2023sequential}.
%Rauch-Tung-Striebel (RTS) smoothing in the Bayesian framework is implemented in two steps, namely forward and backward filtering \cite{sarkka2008unscented}.
Here, we consider a discrete-time stochastic state space model in the following form \cite{sarkka2023bayesian}
	\begin{align} \label{eqnprocess}
			x_k =& f(x_{k-1})+\mu_{k-1}, \\ \label{eqnmeasurement}
			y_k =& h(x_k)+\nu_k,
	\end{align}
where $x_k\in  \mathbb{R}^n$  and $y_k \in \mathbb{R}^m$ are the state of the system and measurement, respectively, at $k^{th}$ instant, $k\in 1,2,\ldots, T$. %, $T$ is the final time step. 
The state evaluation function, $f(x): \mathbb{R}^n \rightarrow \mathbb{R}^n$ and measurement mapping $h(x): \mathbb{R}^n \rightarrow \mathbb{R}^m$ are known nonlinear functions. 
The process and measurement noises are uncorrelated, and they follow a Gaussian probability density function (pdf) given as $\mu_{k-1}\sim \mathcal{N}(0, Q_{k-1})$ and $\nu_k\sim\mathcal{N}(0, R_k)$, respectively, where $Q_{k-1}$ and $R_k$ are the process and measurement noise covariance matrix, respectively.

Smoothing in the context of estimation refers to Bayesian methodology for estimating the past state's history of a system using the sensor noisy measurements available up to the final time step. %Therefore, smoothing significantly enhances the state
It enhances the estimation performance by refining earlier state estimates obtained from the filtering \cite{sarkka2023bayesian}. The smoother computes the marginal posterior distribution of the state ($x_k$) given measurement up to time step $T$  \emph{i.e.} $p(x_k|y_{1:T})$ in two steps: (i) forward filtering and (ii) backward filtering. The forward filter step involves recursively computation of the posterior state estimate $p(x_k|y_k)$ \cite{bar2004estimation, sarkka2023bayesian} 
\begin{align}
p(x_k|y_{k-1}) &= \int p(x_k|x_{k-1})p(x_{k-1}|y_{1:k-1})dx_{k-1} \label{eqnpredict1} \\
p(x_k|y_k) & \propto p(y_k|x_k)p(x_k|y_{1:k-1}) \label{eqnpredict2}.
\end{align}
The Backward filtering step computes $p(x_k|y_{1:T})$ recursively backwards starting from the final time step ($T$) 
 \begin{equation} \label{eqnsmoothint}
     p(x_k|y_{1:T})= p(x_k|y_{1:k})\int \dfrac{p(x_{k+1}|x_k)p(x_{k+1}|y_{1:T})}{p(x_{k+1}|y_{1:k})}dx_{k+1}.
 \end{equation}
For the linear systems, the distribution remains Gaussian, and the Rauch-Tung-Striebel smoother (RTSS) provide a closed-form solution \cite{rauch1965maximum, sarkka2023bayesian}. However, for nonlinear systems, the distributions no longer remain Gaussian, and the closed-form solutions are not available \cite{ito2000gaussian}. In literature, many times, using moment matching, these distributions are approximated to follow the Gaussian distribution \cite{sarkka2023bayesian,ito2000gaussian,wan2000unscented, sarkka2008unscented, arasaratnam2009cubature}, and subsequently mean and their covariances are being computed. In this work, we use these Gaussian approximation-based methods. Other approximation methods also exist, such as the sequential Monte Carlo \cite{bolviken2001monte} and other related approaches. 

Various Gaussian approximation-based smoothing algorithms exist in the literature \cite{sarkka2023bayesian}, such as Taylor series approximation-based extended RTS smoother (ERTSS) \cite{leondes1970nonlinear} and numerical approximation-based smoother, including the unscented RTSS (URTSS) \cite{sarkka2008unscented}, cubature RTSS (CRTSS) \cite{arasaratnam2011cubature}, Gauss-Hermite RTSS \cite{sarkka2010gaussian}, Fourier-Hermite RTSS \cite{sarmavuori2012fourier}, among others. %polynomial chaos RTSS, 
% In earlier literature, they are approximated with a few wisely chosen support points and their weights \cite{bhaumik2019nonlinear}. Support points are generated using cubature or quadrature rule of integration \cite{arasaratnam2009cubature,chalasani2012bearing}, or using unscented transform \cite{wan2000unscented}. Sometimes, the integral is evaluated after linearizing the nonlinear function using the Taylor series at some operating point \cite[Ch.\ 2]{bhaumik2019nonlinear}.
In these Gaussian approximation-based RTS smoother, to realize them, the integrals in \eqref{eqnpredict1}-\eqref{eqnsmoothint} need to be evaluated. If we look closely, these integrals are of the form $\int nonlinear function \times Gaussian \; pdf$. In this work, we propose to solve the smoothing problems using the Gaussian integral as developed in \cite{kumar2023new}. The Gaussian integral provides an exact solution for the integration mentioned in RTS smoothing for a non-linear polynomial function. So, if we use Gaussian integral to develop a smoother for a nonlinear system having polynomial type of nonlinearity, it is expected to perform better in terms of estimation accuracy than existing Gaussian approximation smoothing algorithms \cite{sarkka2023bayesian}. %that it will outperform the popular existing Gaussian smoother such as unscented Kalman smoother (UKS) \cite{sarkka2008unscented}, cubature Kalman smoother (CKS) \cite{hesar2024adaptive} and extended Kalman smoother \cite{yang2024maximum}. \textcolor{blue}{The Gauss-Hermite smoother (GHS) \cite[Ch.\ 10]{sarkka2023bayesian} is not implemented as it suffers from the "curse of dimensionality" \cite{arasaratnam2009cubature} because its points and weights demand increases exponentially with the degree of the system.}
When the nonlinear function is not of polynomial type, the developed method can still be applied after approximating the function using the Taylor series expansion. However, the integration with such expansion will no longer be exact as we will have to use the truncated Taylor series expansion. 

To demonstrate the effectiveness of the proposed smoothing algorithm, we apply it to the Van der Pol (VDP) oscillator and compare the results with existing methods. The VDP oscillator, known for its non-linear damping and complex dynamics, is widely used in fields such as electrical engineering, biology \cite{fitzhugh1961impulses}, and physics \cite{onat2019novel}.
%such as periodic oscillations, relaxation oscillations, and bifurcations \cite{bvd1920theory}. This oscillator is widely used in fields like electrical engineering, biology \cite{fitzhugh1961impulses}, and physics \cite{onat2019novel}.

%In these Gaussian approximation-based methods, to realize an RTS smoother, the integrals in Eqs. \eqref{eqnpredict1}-\eqref{eqnsmoothint} need to be evaluated. If we look closely, these integrals are of the form $\int nonlinear function \times Gaussian \; pdf$. As discussed earlier, for any arbitrary non-linear function $f(x_k)$, a closed-form solution of the integration mentioned above does not exist.

%In earlier literature, they are approximated with a few wisely chosen support points and their weights \cite{bhaumik2019nonlinear}. Support points are generated using cubature or quadrature rule of integration \cite{arasaratnam2009cubature,chalasani2012bearing}, or using unscented transform \cite{wan2000unscented}. Sometimes, the integral is evaluated after linearizing the nonlinear function using the Taylor series at some operating point \cite[Ch.\ 2]{bhaumik2019nonlinear}.

%%%%%%%%%%%%%%%%%%%%%%%%%%%%%%%%%%%%%%%%%%%%%%%%%%%%%%%%%%%%%%%%
\section{Backward filtering in Bayesian smoother}
Non-linear forward filtering is the process of estimating the hidden variable of a time-varying system using noisy sensor measurements, conditioned that the system is observable \cite{yamalakonda2023oscillatory}.
The integrations involved in calculating the posterior state estimate and covariance matrix during the forward filtering are provided in \cite[pp.2]{dunik2013stochastic} or \cite[pp.3]{arasaratnam2009cubature}. 

The backward filtering procedure for the RTS smoother in the Bayesian framework with Gaussian assumptions is derived below.
%using the mathematical background presented in Eqns. \eqref{eqnsmooth1}-\eqref{eqnsmooth5}.  
The joint distribution of $x_k$ and $x_{k+1}$, given the measurements available up to instant $k$ is given as
	\begin{equation}
		p(x_k,x_{k+1}|y_{1:k}) = p(x_{k+1}|x_{1:k})p(x_{k}|y_{1:k}),
	\end{equation}   
	where $p(x_{k+1}|x_{1:k})\sim \mathcal{N}(x_{k+1}; \hat{x}_{k+1|k},P_{xx,k+1|k})$ and $p(x_{k}|y_{1:k})\sim \mathcal{N}(x_{k};\hat{x}_{k|k},P_{xx,k|k})$ are pdf for predicted states and posterior state estimate, obtained from the forward filtering. Hence, the joint pdf of  $x_k$ and $x_{k+1}$ conditioned over measurement $y_{1:k}$ using \cite[lemma~A.2]{sarkka2023bayesian}, is given as
	\begin{equation} \label{eqnrts2}
		p(\begin{bmatrix}  x_k \\ x_{k+1}  \end{bmatrix}|y_{1:k}) \sim \mathcal{N}( \begin{bmatrix} \hat{x}_{k|k} \\ \hat{x}_{k+1|k} \end{bmatrix}, \begin{bmatrix} P_{xx,k|k} & P_{x_k,x_{k+1}} \\ P_{x_k,x_{k+1}}^\top & P_{xx,k+1|k}
		\end{bmatrix}   ),
	\end{equation} 
	where the cross-covariance matrix ($P_{x_k,x_{k+1}}$) is \cite[p.~279]{sarkka2023bayesian}
	\begin{equation}\begin{split} \label{eqnrtscross}
		&P_{x_k,x_{k+1}}=  \mathbb{E}[(x_k-\hat{x}_{k|k})(x_{k+1}-\hat{x}_{k+1|k})^\top  ],\\
		&=  \int x_k x_{k+1}^\top\mathcal{N}(x_k;\hat{x}_{k|k},P_{xx,k|k})dx_k - \hat{x}_{k|k}\hat{x}_{k+1|k}^\top,\\
		&=  \int x_k f(x_{k})^\top\mathcal{N}(x_k;\hat{x}_{k|k},P_{xx,k|k})dx_k -  \hat{x}_{k|k}\hat{x}_{k+1|k}^\top.
    \end{split}	\end{equation}
  %  The joint pdf from Eqn. \eqref{eqnrts2} is Gaussian, such that the conditional distribution of $x_k$ from Eqn. \eqref{eqnsmooth3} can be obtained using the computational rules of Gaussian distribution on Eqn. \eqref{eqnrts2}. The conditional pdf of $x_k$ is Gaussian, taken to be
  %  \begin{equation}\begin{split} \label{eqnrts31}
%	p(x_k|x_{k+1},y_{1:T}) = & p(x_k|x_{k+1},y_{1:k}),\\
%	\sim & \mathcal{N}(x_k;\bar{x}_k,\bar{P}_{k}),
%	\end{split}\end{equation}
 % whose conditional mean and covariance, $(\bar{x}_k,\bar{P}_k)$ is obtained using \cite[lemma~A.2]{sarkka2023bayesian} on joint pdf from Eqn. \eqref{eqnrts2}, given as
%	\begin{equation}\begin{split} \label{eqnrts3}
%		\bar{x}_{k} = &\hat{x}_{k|k}+P_{x_k,x_{k+1|k}}P_{xx,k+1|k}^{-1}(x_{k+1}-\hat{x}_{k+1|k}),\\
%		\bar{P}_k = & P_{xx,k+1|k}-P_{x_k,x_{k+1|k}}(P_{xx,k+1|k})P_{x_k,x_{k+1|k}}^\top.
%	\end{split}\end{equation}
	The smoothing pdf for time step $(k+1)$ is $p(x_{k+1}|y_{1:T})$, which is assumed to be known and Gaussian, is given as
	\begin{equation} \label{eqnrts4}
		p(x_{k+1}|y_{1:T})\sim \mathcal{N}(x_{k+1};\hat{x}_{k+1|T}^s,P_{xx,k+1|T}^s).
	\end{equation}
   %Then, the joint pdf of states for smoothing from Eqn. \eqref{eqnsmooth1} can be written from eqns. \eqref{eqnrts31}, \eqref{eqnrts3} and \eqref{eqnrts4}, as
  % \begin{equation*} \begin{split}
  % 		p(x_k,x_{k+1}|y_{1:T}) = & p(x_k|x_{k+1},y_{1:T})p(x_{k+1}|y_{1:T}),\\
  % 		 = & 	p(\begin{bmatrix}  x_k & x_{k+1}  \end{bmatrix}|y_{1:T}), \\
 %  \end{split}\end{equation*}
	%\begin{equation*}
%	\sim \mathcal{N}( \begin{bmatrix} \bar{x}_{k} \\ \hat{x}_{k+1|T}^s \end{bmatrix}, \begin{bmatrix} \bar{P}_k+P_{x_k,x_{k+1}}(P_{xx,k+1}^s)P_{x_k,x_{k+1}}^\top & P_{x_k,x_{k+1}} P_{xx,k+1|T}^s \\ P_{xx,k+1|T}^s P_{x_k,x_{k+1}}^\top & P_{xx,k+1|T}^s
%	\end{bmatrix}   ),	
%	\end{equation*}
	%	\begin{equation} \begin{split} \label{eqnrts5}
	%	\sim & \mathcal{N}( \begin{bmatrix}  \hat{x}_{k|k}+P_{x_k,x_{k+1}}P_{xx,k+1|k}^{-1}(\hat{x}_{k+1|T}^s-\hat{x}_{k+1|k}) \\ \hat{x}_{k+1|T}^s \end{bmatrix}, \\
%		&\begin{bmatrix} \bar{P}_k+P_{x_k,x_{k+1}}(P_{xx,k+1|k})P_{x_k,x_{k+1}}^\top & P_{x_k,x_{k+1}}P_{xx,k+1|T}^s \\ P_{xx,k+1|T}^s P_{x_k,x_{k+1}}^\top & P_{xx,k+1|T}^s
	%	\end{bmatrix}   ).	
%	\end{split}\end{equation}
Following \cite[pp.~255--257]{sarkka2023bayesian}, we obtain the smoothing distribution at time step $k$, $p(x_k|y_{1:T})\sim \mathcal{N}(x_k;\hat{x}_{k|T}^s,P_{xx,k|T}^s)$, where 
%	The Gaussian smooth pdf at time step $k$ is $p(x_k|y_{1:T})\sim \mathcal{N}(x_k;\hat{x}_{k|T}^s,P_{xx,k|T}^s)$, which is obtained by marginalizing the joint pdf from Eqn. \eqref{eqnrts5} over $x_{k+1}$ using \cite[lemma~A.1]{sarkka2023bayesian}, such that
\begin{equation} \begin{split} \label{eqnrtsx}
		\hat{x}_{k|T}^s = & \hat{x}_{k|k}+P_{x_k,x_{k+1}}P_{xx,k+1|k}^{-1}(\hat{x}_{k+1|T}^s-\hat{x}_{k+1|k}),
		\end{split}\end{equation}
  \begin{equation} \begin{split} \label{eqnrtsp}
  P_{xx,k|T}^s
  %\bar{P}_k+P_{x_k,x_{k+1}}(P_{xx,k+1|T}^s)P_{x_k,x_{k+1}}^\top,\\
		% = & P_{xx,k|k}-P_{x_k,x_{k+1}}(P_{xx,k+1|k})P_{x_k,x_{k+1}}^\top+\\ &
  % P_{x_k,x_{k+1}}(P_{xx,k+1|T}^s)P_{x_k,x_{k+1}}^\top,\\
		 = & P_{xx,k|k}+P_{x_k,x_{k+1}}P_{xx,k+1|k}^{-1}(P_{xx,k+1|T}^s- \\
   & P_{xx,k+1|k})(P_{x_k.x_{k+1}}P_{xx,k+1|k}^{-1})^\top.
	\end{split}\end{equation}
%The smoothed mean and covariance of state $x_k$ at time $k$ \emph{i.e.} $\hat{x}_{k|T}^s$ and $P_{xx,k|T}^s$, are obtained using Eqn. \eqref{eqnrtsx} and \eqref{eqnrtsp}, respectively. 
From the above expressions, we see that the backward filtering is also a recursive process proceeding backwards in time starting from the final time step $T$, such that the smoothing distribution of states at time $k$ is derived from the smoothing distribution at $(k+1)$. 
The parameters such as $\hat{x}_{k|k}$, $P_{xx,k|k}$, $\hat{x}_{k+1|k}$, and $P_{xx,k+1|k}$,  in the \eqref{eqnrtsx} and \eqref{eqnrtsp} are fetched from the forward filtering. $\hat{x}_{k+1|T}^s$ and $P_{xx,k+1|T}^s$ is known and initialized with the posterior estimate at the final time step $T$, \emph{i.e} $\hat{x}_{T|T}^s=\hat{x}_{T|T},P_{xx,T|T}^s=P_{xx,T|T}$. $P_{x_k,x_{k+1}}$ is calculated using \eqref{eqnrtscross}.

\section{Smoothing using Gaussian Integral}
In this section, we propose a Gaussian-integral RTS smoother (GIRTSS) for the stochastic state-space model from \eqref{eqnprocess} and \eqref{eqnmeasurement}.
%As discussed in Introduction, the RTS smoother is performed in two steps namely, forward filtering and backward filtering.
The realisation of the smoothing algorithm demands the evaluation of integrals from forward filtering mentioned in \cite[pp.3]{arasaratnam2009cubature} and \eqref{eqnrtscross} from backward filtering. For that, as discussed earlier, we need to evaluate the integrals of $f(x_k)$ and $x_kf(x_k)$ over a Gaussian pdf. 
This is done using the Gaussian integration (GI) for simple polynomial functions.

The integration of the function $y=\exp(-\dfrac{1}{2}a x^2)$, where the plot for $x$ \emph{vs.} $y$ will be bell-shaped, and the width of curve is influenced by value of parameter $a$, is \cite{straub2009brief} $\int\limits_{-\infty}^{\infty} \exp(-\dfrac{1}{2}ax^2) dx = \sqrt{\dfrac{2\pi}{a}}.$
%$Before evaluating the above integral, at first we use the Gaussian integral technique to solve the integration of the following form $y=\exp(-\dfrac{1}{2}a x^2)$, where the plot for $x$ \emph{vs.} $y$ will be bell-shaped, and the width of curve is influenced by value of parameter $a$. The exact integration value for the function $y$ can be given by \cite{straub2009brief, kumar2023new}
%	\begin{equation}
%		 \int_{-\infty}^{\infty} \exp(-\dfrac{1}{2}ax^2) dx = \sqrt{\dfrac{2\pi}{a}}.  
%	\end{equation}
	% Initially, the Gaussian integral is provided for a 1D variable, which is extended for an multidimensional system, and then for any arbitrary polynomial function in Ref. \cite{kumar2023new}.\\
	For any variable $x\in \mathbb{R}$, such that $x\sim \mathcal{N}(0,c)$ and $m$ is any non-negative integer, the integral is
	\begin{equation*} 
			\int\limits_{-\infty}^{\infty}x^m \exp(-\dfrac{x^2}{2c^2})dx=    \begin{cases} (2c)^{\dfrac{m+1}{2}}\Gamma(\dfrac{m+1}{2}) \ \  \text{if m is even}, \\ 
                0, \text{if m is odd}. \end{cases}
	\end{equation*}
	Similarly, for a variable $x\in \mathbb{R}^{n}$, where independent identically distributed $x=\begin{bmatrix} x_1 & x_2 \ldots x_n \end{bmatrix}^\top \sim \mathcal{N}(0,c_i)$, their corresponding orders are $m_i (i=1,2\ldots n)$, and variance are  $c_i (i=1,2\ldots n)$, the integral for such case is 
\begin{equation*} \begin{split}
			 &  \int\limits_{-\infty}^{\infty} x_1^{m_1} \exp(- \dfrac{x_1^2}{2c_1})x_2^{m_2} \exp(- \dfrac{x_2^2}{2c_2}) \ldots x_n^{m_n} \exp(- \dfrac{x_n^2}{2c_n}) dx,\\
			&=  \int\limits_{-\infty}^{\infty} \prod_{i=1}^{i=n} x_i^{m_i} \exp(-\dfrac{1}{2}\sum_{i=1}^{n}\dfrac{x_i^2}{c_i})dx,\\
     &=  \begin{cases}\prod_{i=1}^{n}\{ (2c_i)^{\dfrac{m_i+1}{2}}\Gamma(\dfrac{m_i+1}{2}) \} \ \ \text{if all $m_i$'s are even  },\\
			   0 \  \text{otherwise}.\end{cases}
	\end{split}	\end{equation*} 
	% \begin{equation} \begin{split}
	% 		 &  \int_{-\infty}^{\infty} x_1^{m_1} \exp(- \dfrac{x_1^2}{2c_1})x_2^{m_2} \exp(- \dfrac{x_2^2}{2c_2}) \ldots x_n^{m_n} \exp(- \dfrac{x_n^2}{2c_n}) dx,\\
	% 		&=  \int_{-\infty}^{\infty} \prod_{i=1}^{i=n} x_i^{m_i} \exp(-\dfrac{1}{2}\sum_{i=1}^{n}\dfrac{x_i^2}{c_i})dx, \\
	% 		 &=  \begin{cases}\prod_{i=1}^{n}\{ (2c_i)^{\dfrac{m_i+1}{2}}\Gamma(\dfrac{m_i+1}{2}) \} \ \ \text{if all $m_i$'s are even  },\\
	% 		   0 \  \text{otherwise}.\end{cases}
	% \end{split}	\end{equation} 
%The above mentioned integrations can be extended to perform the integrations of functions from Eqns. \eqref{GIfx} and \eqref{GIxfx}, which are presented in Theorem 1 and 2, respectively. 
 \begin{theorem}
	The integral of any polynomial function $f_1(x) = \prod_{i=1}^{n} x_i^{m_i}$ having Gaussian pdf $\mathcal{N}(x;\hat{x},P)$, where $x_i = \begin{bmatrix} x_1 & x_2 & \ldots & x_n \end{bmatrix}$ and $m_i\in Z+$ is
% \begin{equation*}
%     \mathcal{I} = \int\limits_{-\infty}^{\infty} f_1(x)\mathcal{N}(x;\hat{x},P)dx,
% \end{equation*}
\begin{equation}\begin{split} \label{eqngi}
& \int\limits_{-\infty}^{\infty} f_1(x)\mathcal{N}(x;\hat{x},P)dx \\&
			= \begin{cases}  \dfrac{1}{\sqrt{{\pi}^n}} \Big[ \sum_{\cup_{a_{i,j}}} \Bigl\{ \prod_{i=1}^{n} \big( D_i \hat{x}_{i}^{a_{i,1}} ( \prod_{j=1}^{n}S_{ij}^{a_{i,j+1}} ) \\
   (2c_i)^{(\dfrac{1}{2}\sum_{j=1}^n a_{j,i+1} )} 
			 \Gamma(\dfrac{\sum_{j=1}^n a_{j,i+1}+1}{2})     \big)\Bigr\}\Big]	\\ \text{when}\ \sum_{j=1}^na_{j,i+1} \ \text{is even},\\
			   0 \ \  \text{otherwise}, \end{cases}
\end{split}\end{equation}
where $D_i=\dfrac{m_i!}{a_{i,1}!a_{i,2}\,!\ldots a_{i,n+1}!}$ is the multinomial coefficient. $c_i$ is the eigen value of the covariance matrix $P$, and $S$ is the orthogonal matrix satisfying the condition $S^{-1}PS=\diag(d_1,d_2\ldots d_n)$, $S_{i,j}$ is the $(i,j)^{th}$ element of matrix $S$. 
  The set $\cup_{a_{i,j}}$ $(i=1,2,\ldots n),\ (j=1,2\ldots n+1)$ are all possible combinations of $a_{i,j}$ which satisfies the two conditions, that is $\sum_{j=1}^{n+1}a_{i,j} = m_i$, and $\sum_{j=1}^na_{j,i+1} =\text{even}$, subsequently.
 \end{theorem}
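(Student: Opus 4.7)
The plan is to reduce the theorem to the independent-Gaussian moment identity stated immediately before it by (i) diagonalising the covariance $P$ with the orthogonal matrix $S$ so that the Gaussian factorises, (ii) expanding each $x_i^{m_i}$ via the multinomial theorem in the rotated coordinates, and (iii) applying the 1D moment formula component-wise.

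First I would perform the change of variables $z = S^\top (x-\hat x)$, where $S$ is the orthogonal matrix introduced in the statement. Because $S$ is orthogonal the Jacobian is unity, and since $S^{-1}PS=\diag(c_1,\ldots,c_n)$ the quadratic form in the Gaussian exponent collapses to $\sum_{i=1}^n z_i^2/c_i$. Hence $\mathcal{N}(x;\hat x,P)\,dx$ becomes a product $\prod_i \mathcal{N}(z_i;0,c_i)\,dz_i$ of independent scalar Gaussians, which is exactly the setting for the independent-variable identity quoted just above the theorem.

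Next I would rewrite the polynomial in the new coordinates using $x_i = \hat x_i + \sum_{j=1}^n S_{ij}z_j$. Applying the multinomial theorem to this $(n+1)$-term sum gives
\begin{equation*}
x_i^{m_i}=\sum_{a_{i,1}+\cdots+a_{i,n+1}=m_i}\frac{m_i!}{a_{i,1}!\cdots a_{i,n+1}!}\,\hat x_i^{a_{i,1}}\prod_{j=1}^n S_{ij}^{a_{i,j+1}}z_j^{a_{i,j+1}},
\end{equation*}
so the multinomial coefficient is precisely $D_i$. Taking the product over $i$ distributes these sums into one joint sum over all tuples $\{a_{i,j}\}$ subject to the row-constraint $\sum_{j=1}^{n+1} a_{i,j}=m_i$, and the total exponent of $z_k$ in a generic term is $\sum_{i=1}^n a_{i,k+1}$.

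Finally I would integrate each term against the product Gaussian. By independence the integral factorises across $z$-coordinates, and the scalar identity forces a term to vanish unless $\sum_i a_{i,k+1}$ is even for every $k$; in the surviving terms each factor evaluates to $(2c_k)^{p_k/2}\Gamma((p_k+1)/2)/\sqrt{\pi}$ with $p_k=\sum_i a_{i,k+1}$, and the $n$ denominator factors of $\sqrt{\pi}$ combine into the overall $1/\sqrt{\pi^n}$. Regrouping the $D_i$, $\hat x_i^{a_{i,1}}$, $S_{ij}^{a_{i,j+1}}$, $(2c_i)^{\cdot/2}$, and $\Gamma(\cdot)$ contributions into a single product over $i$ yields \eqref{eqngi} verbatim. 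The main obstacle I expect is purely notational: the indices $i$ and $j$ swap roles between the multinomial expansion (where $a_{i,j+1}$ records how much the expansion of $x_i$ contributes to the exponent of $z_j$) and the integration step (where the parity condition and the $(2c_i)^{\cdot/2}\Gamma$ factors are indexed by the $z$-coordinate, i.e.\ by $\sum_{j=1}^n a_{j,i+1}$). Once this index transposition is tracked carefully, the remainder is a routine combination of the orthogonal substitution and the 1D moment identity already in the paper.
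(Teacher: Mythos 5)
Your proposal is correct and follows exactly the route the paper intends: the paper itself gives no in-text proof (it defers to the cited reference \cite{kumar2023new}), but the machinery it sets up immediately before the theorem --- the orthogonal diagonalisation $S^{-1}PS=\diag(\cdot)$ and the odd--even scalar moment identity --- is precisely what your argument assembles via the substitution $z=S^\top(x-\hat x)$ and the multinomial expansion. Your remark about the index transposition (row sums $\sum_j a_{i,j}=m_i$ from the expansion of $x_i$ versus column sums $\sum_j a_{j,i+1}$ governing the parity and the $(2c_i)^{\cdot/2}\Gamma(\cdot)$ factors) correctly identifies the only delicate point in matching the stated formula.
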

 \begin{proof}
     The detailed proof for Theorem 1 is provided in \cite{kumar2023new}.
 \end{proof}
%.
% The GI RTS smoother is developed in the Bayesian framework for the non-linear discrete-time system from Eqn. \eqref{eqnprocess}. It is implemented in two steps namely, forward and backward filtering, which demands the evaluation of the integrations from Eqn. \eqref{eqnfilter1} - \eqref{eqnfilter3}, \eqref{eqnrtscross}.  
\vspace{-0.8 cm}
\begin{note}
Theorem 1 can be extended for the calculation of integration of multidimensional polynomial function of form $f(x)= \begin{bmatrix} f_1(x) & f_2(x)\ldots f_n(x) \end{bmatrix}^\top$, which are evaluated by calculating the integral row wise separately, \emph{i.e.} $\mathcal{I} = \begin{bmatrix} \mathcal{I}_1 & \mathcal{I}_2 & \ldots & \mathcal{I}_3 \end{bmatrix}$.
\end{note}
%%%%%%%%%%%%%
The backward filtering demands the evaluation of the integration of form $x_kf(x_k)$ for the calculation of the cross-covariance matrix as given in \eqref{eqnrtscross}.
The integrand from the cross-covariance \eqref{eqnrtscross} \emph{i.e.} $\int x_k f(x_k)^\top \mathcal{N}(x_k;\hat{x}_{k|k},\\P_{xx,k|k})dx_k$ can be written in expanded form for process function $f(x_k)=\begin{bmatrix} f_1(x_{k}) & f_2(x_{k}) \ldots f_n(x_{k}) \end{bmatrix}_{n \times 1}^\top$ as
\begin{equation} \label{eqnmatrix}
  x_kf(x_k)^\top = \begin{bmatrix} 
      x_{1,k} f_1(x_k) & x_{1,k} f_2(x_k) &  \ldots & x_{1,k} f_n(x_k) \\
      x_{2,k} f_1(x_k) & x_{2,k} f_2(x_k) &  \ldots & x_{2,k} f_n(x_k) \\
      \vdots & \vdots & \vdots & \vdots \\
      x_{n,k} f_1(x_k) & x_{n,k} f_2(x_k) &  \ldots & x_{n,k} f_n(x_k) \\
  \end{bmatrix},
\end{equation}
where the process function vector elements are non-linear polynomial functions which are written as a sum of polynomial terms as, $f_{g'}(x)= \sum_{p=1}^{q} \alpha_p \prod_{i=1}^n x_{i}^{m_{i,p}} $.
\begin{theorem}
    The Gaussian integration for any element of matrix \eqref{eqnmatrix}  can be computed by 
    \begin{equation} \label{eqnfg}
        \mathcal{I} =  \int\limits_{-\infty}^{\infty} x_{g,k}f_{g'}(x_k)\mathcal{N}(x_k;\hat{x}_{k|k},P_{xx,k|k})dx_k,
    \end{equation}
    \vspace{-0.5 cm}
\begin{equation} \label{eqngi2}
        \begin{split}
           &=  \sum_{p=1}^{q} \alpha_p \Bigg[ \dfrac{1}{\sqrt{{\pi}^n}}  \sum_{\cup_{a_{i,j}}} \Bigl\{ \prod_{i=1}^{n} \big( D_i \hat{x}_{i}^{a_{i,1}} ( \prod_{j=1}^{n}S_{ij}^{a_{i,j+1}} ) \\
   & \ \ (2c_i)^{(\dfrac{1}{2}\sum_{j=1}^n a_{j,i+1} )} 
			 \Gamma(\dfrac{\sum_{j=1}^n a_{j,i+1}+1}{2})     \big)\Bigr\}\Bigg],
        \end{split}
    \end{equation}
 where $\cup_{a_{i,j}}$ $(i=1,2,\ldots n),\ (j=1,2\ldots n+1)$ are all possible combinations of $a_{i,j}$ which satisfies the two conditions subsequently given as:
(i) when $i\neq g$, $\sum_{j=1}^{n+1}a_{i,j} = m_{i,p}$ and $D_i=\dfrac{m_{i,p}!}{a_{i,1}!a_{i,2}\,!\ldots a_{i,n+1}!}$,  for $i= g$,   $\sum_{j=1}^{n+1}a_{i,j} = m_{i,p}+1$, and $D_i= \dfrac{(m_{i,p}+1)!}{a_{i,1}!a_{i,2}\,!\ldots a_{i,n+1}!}$, 
  (ii) $\sum_{j=1}^n a_{j,i+1} =\text{even}$.  $c_i$ is the eigenvalue of the covariance matrix $P_{xx}$, and $S$ is the orthogonal matrix satisfying the condition $S^{-1}P_{xx}S=\diag(d_1,d_2\ldots d_n)$, $S_{i,j}$ is the $(i,j)^{th}$ element of matrix $S$.
\begin{proof}
    The integral $\mathcal{I}$ from \eqref{eqnfg} can be further expressed by substituting $f_{g'}(x_k)= \sum_{p=1}^{q} \alpha_p \prod_{i=1}^n x_{i}^{m_{i,p}}$, as
    \begin{equation*}
              \mathcal{I}  =  \int\limits_{-\infty}^{\infty} x_{g,k}\sum_{p=1}^{q} \alpha_p \prod_{i=1}^n x_{i,k}^{m_{i,p}}\mathcal{N}(x_k;\hat{x}_{k|k},P_{xx,k|k})dx_k,\end{equation*}\begin{equation*}
        =  \int\limits_{-\infty}^{\infty} \sum_{p=1}^{q} \alpha_p \prod_{i=1}^n x_{g,k} x_{i,k}^{m_{i,p}}\mathcal{N}(x_k;\hat{x}_{k|k},P_{xx,k|k})dx_k.
    \end{equation*}
Using Theorem 1, the above integral becomes \eqref{eqngi2}.
\end{proof}
    
%     for the $(g,g')$ element of the matrix from \eqref{eqnmatrix} is 
%     \begin{equation*} \begin{split} 
%    \mathcal{I} &=  \int\limits_{-\infty}^{\infty} x_{g,k}f_{g'}(x_k)\mathcal{N}(x_k;\hat{x}_{k|k},P_{xx,k|k})dx_k,\\
%         &=  \int\limits_{-\infty}^{\infty} x_{g,k}\sum_{p=1}^{q} \alpha_p \prod_{i=1}^n x_{i,k}^{m_{i,p}}\mathcal{N}(x_k;\hat{x}_{k|k},P_{xx,k|k})dx_k,\\
%          &=  \int\limits_{-\infty}^{\infty} \sum_{p=1}^{q} \alpha_p \prod_{i=1}^n x_{g,k} x_{i,k}^{m_{i,p}}\mathcal{N}(x_k;\hat{x}_{k|k},P_{xx,k|k})dx_k,
% \end{split} \end{equation*}
% \begin{equation} \begin{split} \label{eqngi2}
%         &=  \sum_{p=1}^{q} \alpha_p \Bigg[ \dfrac{1}{\sqrt{{\pi}^n}}  \sum_{\cup_{a_{i,j}}} \Bigl\{ \prod_{i=1}^{n} \big( D_i \hat{x}_{i}^{a_{i,1}} ( \prod_{j=1}^{n}S_{ij}^{a_{i,j+1}} ) \\
%    & \ \ (2c_i)^{(\dfrac{1}{2}\sum_{j=1}^n a_{j,i+1} )} 
% 			 \Gamma(\dfrac{\sum_{j=1}^n a_{j,i+1}+1}{2})     \big)\Bigr\}\Bigg].
% \end{split} \end{equation}
\end{theorem}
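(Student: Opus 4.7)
The plan is to reduce the integral in \eqref{eqnfg} to an application of Theorem~1, exploiting the fact that multiplying the integrand by the scalar $x_{g,k}$ only shifts one exponent in the monomial by one. First I would substitute the polynomial expansion $f_{g'}(x_k) = \sum_{p=1}^{q} \alpha_p \prod_{i=1}^n x_{i,k}^{m_{i,p}}$ into \eqref{eqnfg} and use linearity of the integral to pull the outer sum $\sum_{p=1}^q \alpha_p$ outside. This leaves me with a family of scalar Gaussian integrals of the form
\begin{equation*}
\int_{-\infty}^{\infty} x_{g,k} \prod_{i=1}^n x_{i,k}^{m_{i,p}} \, \mathcal{N}(x_k; \hat{x}_{k|k}, P_{xx,k|k}) \, dx_k,
\end{equation*}
one for each $p$, to which I must apply Theorem~1.

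The key observation is that $x_{g,k} \prod_{i=1}^n x_{i,k}^{m_{i,p}}$ is itself a monomial of the same shape $\prod_{i=1}^n x_{i,k}^{\tilde m_{i,p}}$, where $\tilde m_{i,p} = m_{i,p}$ for $i \neq g$ and $\tilde m_{g,p} = m_{g,p}+1$. Theorem~1 therefore applies directly with this shifted exponent vector in place of $(m_i)$. Substituting the $\tilde m_{i,p}$ into \eqref{eqngi} produces precisely the bracketed expression in \eqref{eqngi2}: the multinomial coefficient $D_i$ reads $m_{i,p}!/(a_{i,1}!\cdots a_{i,n+1}!)$ for $i \neq g$ and $(m_{g,p}+1)!/(a_{g,1}!\cdots a_{g,n+1}!)$ for $i = g$, while the constraint $\sum_{j=1}^{n+1} a_{i,j} = \tilde m_{i,p}$ recovers the two sub-cases stated under condition~(i). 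Condition~(ii) on the parity of $\sum_{j=1}^{n} a_{j,i+1}$, together with the eigendecomposition data $S_{ij}$ and $c_i$ and the gamma-function factors, carry over verbatim from Theorem~1. Summing over $p$ and multiplying by $\alpha_p$ then yields the claimed formula.

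Because this is essentially a mechanical reduction, the principal obstacle I anticipate is bookkeeping rather than any substantive mathematical difficulty. One has to track which coordinate index ($g$) receives the shift, and to state the constraints on the multi-index family $\cup_{a_{i,j}}$ consistently with the split between $i = g$ and $i \neq g$; a careless statement here would either miscount terms or break the parity condition inherited from Theorem~1. Once that accounting is carefully verified on a single monomial, linearity of the integral immediately extends the conclusion to the full polynomial $f_{g'}$.
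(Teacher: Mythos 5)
Your proposal is correct and follows essentially the same route as the paper's own proof: substitute the polynomial expansion of $f_{g'}$, pull the sum $\sum_p \alpha_p$ out by linearity, absorb the extra factor $x_{g,k}$ into the monomial by incrementing the exponent of coordinate $g$, and apply Theorem~1 to the shifted exponent vector. In fact, you make the bookkeeping (the shifted exponents $\tilde m_{i,p}$ and the resulting split in the constraints and multinomial coefficients) more explicit than the paper does, which simply states the substitution and invokes Theorem~1.
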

\vspace{-0.25 cm}
The resulting GIRTSS is presented in Algorithm \ref{Algo1_GIRTSS}.
\vspace{-0.25 cm}
\begin{algorithm}
		\caption{Gaussian integral-based RTS smoother}\label{Algo1_GIRTSS}
		\begin{algorithmic}[1]
  \Function{$[\hat{x}_{k\mid T}^s, \, P_{k\mid T}^s] = \text{GIRTSS}$}{$\hat{x}_{0\mid 0},\, P_{xx,0\mid 0}$}. 
  \For{$k = 1, \ldots, T$}
  \State 	$\begin{bmatrix}S_{k-1|k-1}, & c_{k-1|k-1}\end{bmatrix}= eig(P_{xx,k-1|k-1})$.
  \State $\hat{x}_{k|k-1} = \int f(x_{k-1})p(x_{k-1}|y_{k-1}) dx_{k-1}$ using \eqref{eqngi}.
  \State $P_{xx,k|k-1} = \int f(x_{k-1})f(x_{k-1})^\top p(x_{k-1}|y_{k-1}) dx_{k-1} $
  \Statex \hspace{2.3 cm} $- \hat{x}_{k|k-1}\hat{x}_{k|k-1}^\top+Q_{k-1}$ using \eqref{eqngi2}.
 \State $ \hat{y}_{k|k-1} =  \int h(x_k) p(x_k|y_{k-1}) dx_k $ using \eqref{eqngi}.
 \State $P_{yy,k|k-1} = \int h(x_k)h(x_k)^\top p(x_k|y_{k-1}) dx_k $
 \Statex \hspace{2.3 cm}$-  \hat{y}_{k|k-1}\hat{y}_{k|k-1}^\top+R_k$ using \eqref{eqngi2}.
\State $P_{xy,k|k-1} = \int x_kh(x_k)^\top p(x_k|y_{k-1})dx_k$
\Statex \hspace{2.3 cm}$-\hat{x}_{k|k-1}y_{k|k-1}^\top$ using \eqref{eqngi2}.
\State $K_k = P_{xy,k|k-1}(P_{yy,k|k-1})^{-1}$.
  \State $\hat{x}_{k|k} =  \hat{x}_{k|k-1}-K_k(y_k-\hat{y}_{k|k-1})$.
 \State $P_{xx,k|k} =  P_{xx,k|k-1}-K_kP_{yy,k|k-1}K_k^\top$.
\EndFor
\State $\hat{x}_{T\mid T}^s = \hat{x}_{T\mid T} $ and $P_{xx,T\mid T}^s = P_{xx,T\mid T}$. 
  \For{$k = T-1, \ldots, 1$}
\State 	$\begin{bmatrix}S_{k-1|k-1}, & c_{k-1|k-1}\end{bmatrix}= eig(P_{xx,k|k})$.
  \State Calculate $P_{x_k,x_{k+1|k}}$ from \eqref{eqnrtscross} using \eqref{eqngi2}.
  \State Compute $\hat{x}_{k|T}^s$ and $P_{xx,k|T}^s$ using \eqref{eqnrtsx} and \eqref{eqnrtsp}, \Statex \hspace{0.4 cm} respectively. 
  % \State Calculate $\hat{x}_{k|T}^s$ using \eqref{eqnrtsx}.
  % \State Calculate $P_{xx,k|T}^s$ using \eqref{eqnrtsp}.
\EndFor
\EndFunction
  \end{algorithmic}
  \end{algorithm} 
\begin{figure*}
	\centering
	\begin{subfigure}[b]{0.316\textwidth}
		\centering
		\includegraphics[width=\textwidth,height = 3.8 cm]{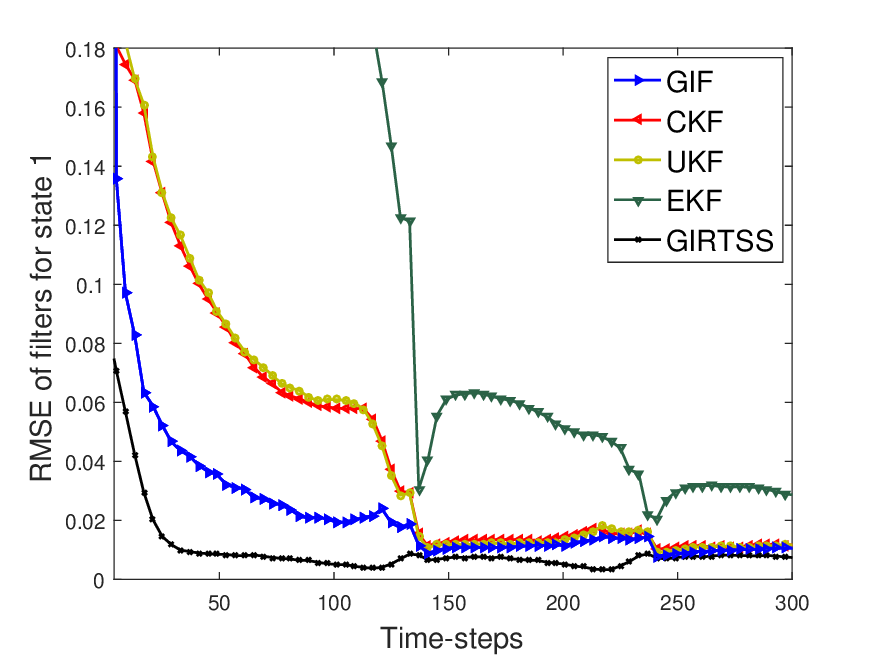}
		%\caption{}
	\end{subfigure}
 \begin{subfigure}[b]{0.316\textwidth}
		\centering
		\includegraphics[width=\textwidth,height = 3.8 cm]{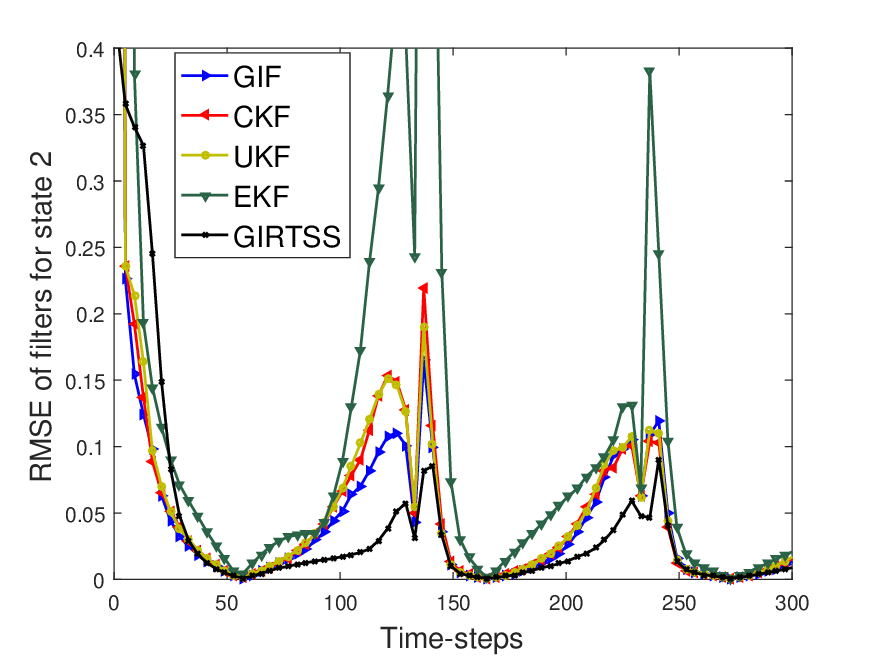}
		%\caption{}
	\end{subfigure}
	\begin{subfigure}[b]{0.316\textwidth}
		\centering
		\includegraphics[width=\textwidth,height = 3.8 cm]{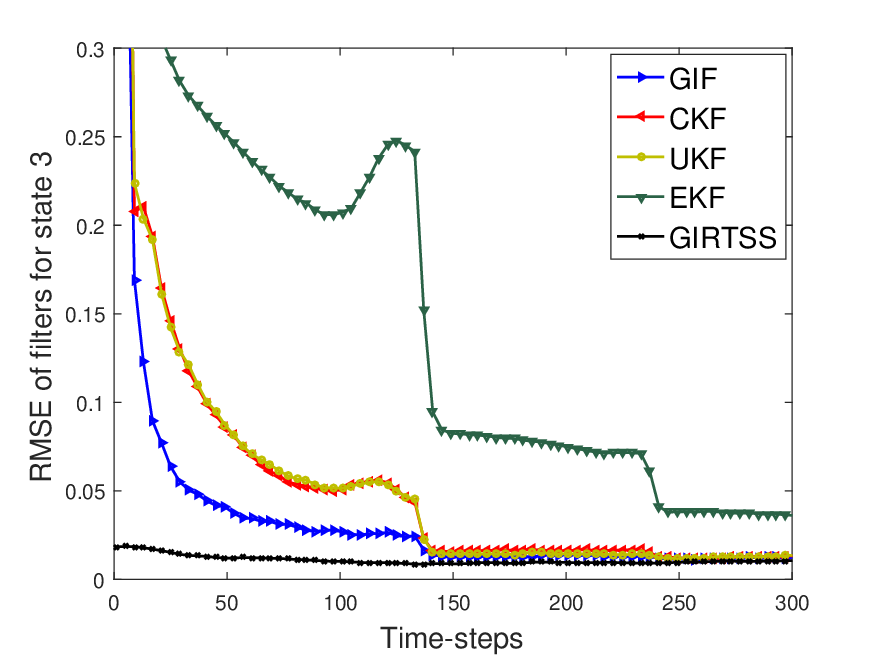}
	%	\caption{}
	\end{subfigure}
	\caption{RMSE of filters for state 1, state 2 and state 3.  }
	 \label{figvdprmse1}
\end{figure*}
\begin{figure*}
	\centering
	\begin{subfigure}[b]{0.316\textwidth}
		\centering
		\includegraphics[width=\textwidth,height = 3.8 cm]{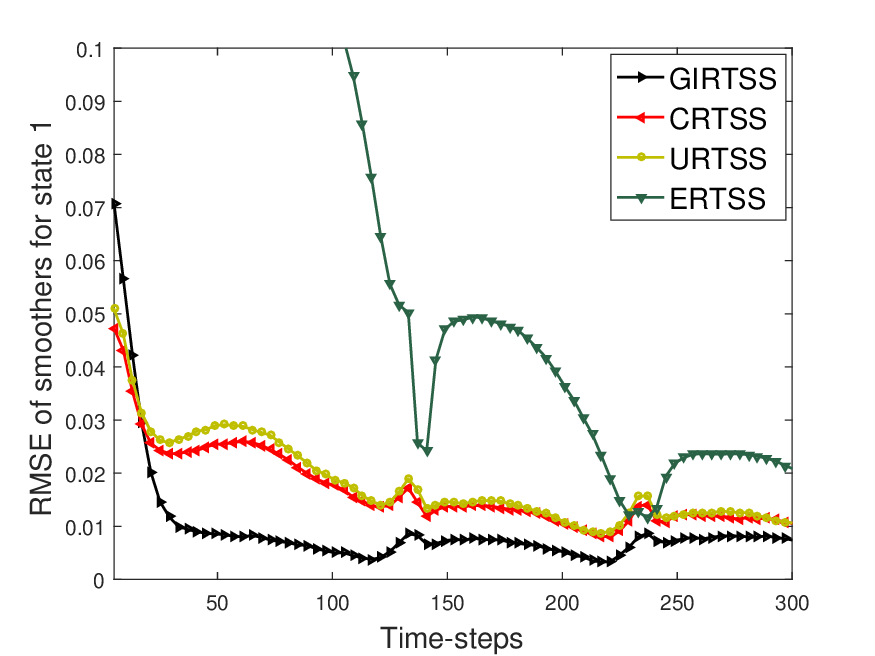}
		%\caption{}
	\end{subfigure}
 \begin{subfigure}[b]{0.316\textwidth}
		\centering
		\includegraphics[width=\textwidth,height = 3.8 cm]{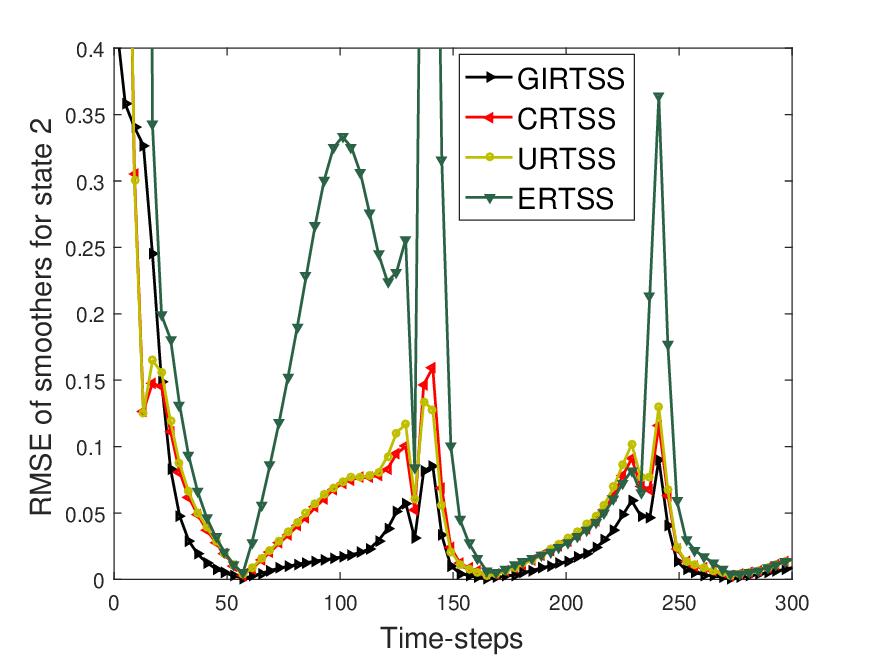}
		%\caption{}
	\end{subfigure}
	\begin{subfigure}[b]{0.316\textwidth}
		\centering
		\includegraphics[width=\textwidth,height = 3.8 cm]{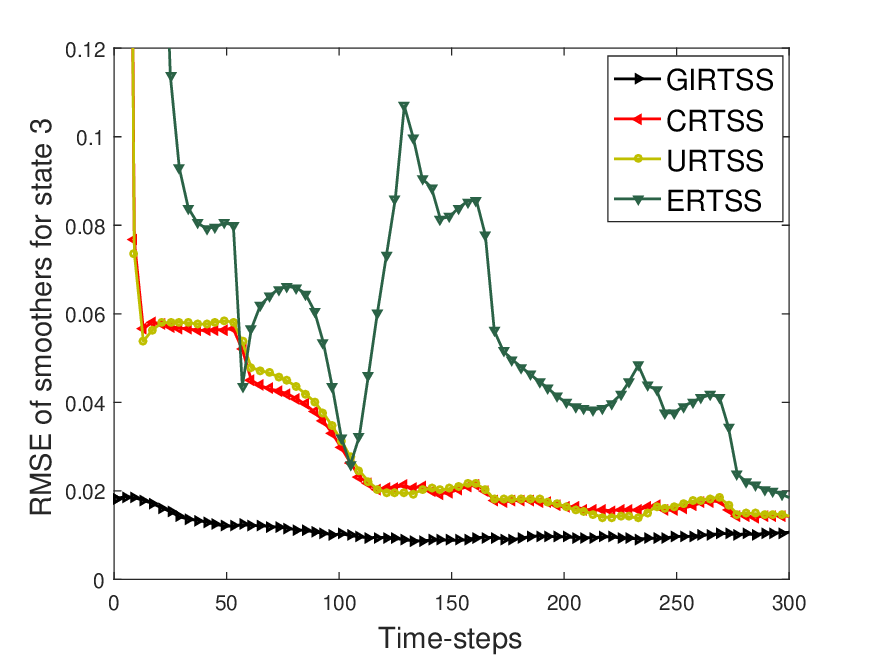}
		%\caption{}
	\end{subfigure}
	\caption{RMSE of smoothers for state 1, state 2 and state 3.  }
	 \label{figvdprmse2}
\end{figure*}
\vspace{-0.4 cm}
\section{Simulation and Results}
In this section, we implement the proposed GIRTSS to obtain smoothed states for the Van der Pol oscillatory systems. 
VDP oscillator is described by a second-order differential equation given below
\begin{equation} \label{eqnvdpsys}
\ddot{\mathcal{X}}-\zeta(1-\mathcal{X}^2)\dot{\mathcal{X}}+\mathcal{X}= \mathcal{F}(t), 
\end{equation}
where $\mathcal{X}$ and $\zeta$ are the displacement and friction damping coefficient, respectively. $\mathcal{F}(t) = A\cos(\lambda t)$ is the forcing input, where $A$ and $\lambda$ are amplitude and frequency of oscillation, respectively. The states variables are \cite{onat2019novel} $x(t)=\begin{bmatrix}x_1 & x_2 & x_3 \end{bmatrix}^\top=\begin{bmatrix} \mathcal{X} & \dot{\mathcal{X}} & \zeta \end{bmatrix}^\top$. 
% where $\mathcal{X}$ is the displacement of oscillator \emph{w.r.t.} time $t$, $\zeta$ is the friction damping coefficient which represent the strength of non-linear damping. $\mathcal{F}(t) = A\cos(\lambda t)$ is the forcing input, where $A$ and $\lambda$ are amplitude and frequency of oscillation, respectively. The states variables are displacement $\mathcal{X}$, velocity $\dot{\mathcal{X}}$ and friction damping coefficient $\zeta$,  which are taken as $x(t)=\begin{bmatrix}x_1 & x_2 & x_3 \end{bmatrix}^\top$. The state-space form for Eqn. \eqref{eqnvdpsys} is given as  \cite{onat2019novel}
% \begin{equation}\begin{split} \label{eqnvdpstate}
% \dot{x}_1 = & x_2,\\
% \dot{x}_2 = & (\zeta (1-x_{1}^2)x_2-x_1+A\cos \lambda t),\\
% \dot{x}_3 = & 0.
% \end{split}
% \end{equation}
% The dynamics of states is given as
% \begin{equation*}\begin{split}
% x_{1,k+1} = &  x_{1,k} + x_{2,k}\delta,\\
%     x_{2,k+1} =& x_{2,k}+(x_{3,k}(1-x_{1,k}^2)x_{2,k} -x_{1,k}+A\cos(\lambda k\delta) )\delta,\\
%     x_{3,k+1} = & x_{3,k}.
% \end{split}\end{equation*}
VDP is represented in the form of state space model from \eqref{eqnprocess} and \eqref{eqnmeasurement}. The discrete-time state evaluation function is $f(x_k) = \big[ x_{1,k} + x_{2,k}\delta \ \ x_{2,k}+(x_{3,k}(1-x_{1,k}^2)x_{2,k} -x_{1,k}+A\cos(\lambda k\delta) )\delta  \ \ x_{3,k} \big]^\top$, and the measurement function is $h(x_k) = \begin{bmatrix}x_{1,k}&  x_{2,k}\end{bmatrix}^\top$. The process and measurement noise covariance matrix are $Q_k = 10^{-3}\times I_3$ and $R_k =10^{-1} \times I_2$, respectively. $\delta$ is the sampling interval.

\textit{Implementation and performance analysis:}
The performance of the proposed GIRTSS is compared with popular existing smoothers, including ERTSS, URTSS, and CRTSS, using performance metrics such as root mean square error (RMSE), average RMSE, and relative execution time (RET). These algorithms are applied to the state smoothing of the Van der Pol (VDP) oscillator. GIRTSS is implemented using Algorithm \ref{Algo1_GIRTSS}, URTSS is implemented with parameter $\kappa = -1$, CRTSS with a 3rd-degree spherical-radial rule, and ERTSS with a first-order Taylor series approximation of the non-linear function.
Simulated data for true states and measurements is generated using the initial true state $x_{1} = \begin{bmatrix} 2.75 & 0 & 2 \end{bmatrix}^\top$. The system parameters for modeling the VDP oscillator dynamics are $A = 100$ and $\lambda = (1.85\pi)/2$ \cite{yamalakonda2023oscillatory}. The initial state estimate is $\hat{x}_{0|0} = \begin{bmatrix} 0 & -3 & 1\end{bmatrix}^\top$, with an error covariance of $P_{xx,0|0} = \text{diag}(\begin{bmatrix} 10 & 10 & 0.5 \end{bmatrix})$. The sampling interval is $\delta = 0.01$, and the simulation is conducted over 300 time steps with 1000 Monte Carlo runs.

The RMSE for the forward filters such as Gaussian integral filter (GIF), cubature Kalman filter (CKF), unscented Kalman filter (UKF), and extended Kalman filter (EKF), along with the RMSE for GIRTSS, for 3 system states of VDP oscillator is plotted in Fig. \ref{figvdprmse1}. 
 The GIF achieves the lowest RMSE among the filters. Notably, the RMSE for GIF and GIRTSS are the same at the final time step, but the RMSE for GIRTSS decreases as we move backwards in time, illustrating the smoothers enhanced performance.
 
The RMSE plots for smoothers are shown in Fig. \ref{figvdprmse2}.
It is observed that the proposed GIRTSS smoother achieves the lowest RMSE, followed by URTSS, CRTSS, and ERTSS.
This superior performance is due to GIRTSS evaluating the exact Gaussian integration for polynomial functions over a Gaussian PDF instead of approximating it.
 Notably, the RMSE values for the smoothers at the final time step match those of their corresponding filters. However, as time moves backwards, the smoothers RMSE values converge to lower levels compared to the filters.
 The average RMSEs for the three states (S1, S2, S3) are provided in Table \ref{tablevdp}, confirming that the GI-based approach is the best-performing, followed by CRTSS, URTSS, and ERTSS. The RET for filters and smoothers relative to EKF is shown in Table \ref{tablevdp}. The GI-based algorithm has a higher RET due to the computationally expensive nature of the exact GI.
  %The dynamics of the VDP oscillator, with its filtered estimate obtained using GIF, and the smoothed estimate using GIS is shown in Fig. \ref{figvdpdynamics}. It illustrates that GI smoothing significantly enhances the state estimation performance by refining state estimates obtained from the GI filter.
  %The filter and smoother average RMSEs are given in Table \ref{tablevdp}. The lowest average RMSE among the filters is given by GIF, followed by GHF, CKF, UKF and EKF. The same trend is showcased by their corresponding smoother algorithm. Therefore, it can be concluded that the GI filter and smoother illustrated better performance than the existing Gaussian algorithm for the VDP oscillatory system. The chaotic dynamics of the VDP oscillator, its estimated trajectory using GIF, and its smoothed estimate for the filtered states using GIS are illustrated in Fig. \ref{figvdpdynamics}.
    \begin{table}[H]
    \caption{Average RMSE and relative execution time (RET) for different filters and smoothers.}
    \begin{center}
        \setlength{\tabcolsep}{0.148cm}  % Reduce column spacing
        \begin{tabular}{|c@{\hskip 0.18cm}c@{\hskip 0.25cm}c@{\hskip 0.25cm}c@{\hskip 0.17cm}c|c@{\hskip 0.1cm}c@{\hskip 0.25cm}c@{\hskip 0.25cm}c@{\hskip 0.17cm}c|}
            \hline
            Filter & S1 & S2 & S3 & RET  & Smoother & S1 & S2 & S3 & RET\\
            \hline
            GIF & 0.026 & 0.049 & 0.032 & 5.03 & GIRTSS & 0.009 & 0.034 & 0.009 & 6.33 \\
            CKF & 0.040 & 0.055 & 0.047 & 1.63 & CRTSS & 0.014 & 0.042 & 0.040 & 1.96\\
            UKF & 0.044 & 0.057 & 0.051 & 1.63 & URTSS & 0.014 & 0.044 & 0.044 & 1.962 \\
            EKF & 0.177 & 0.357 & 0.143 & 1 & ERTSS & 0.136 & 0.264 & 0.089 & 1.38 \\
            \hline
        \end{tabular}
        \label{tablevdp}
    \end{center}
\end{table}
\vspace{-0.7cm}
\section{Conclusion}
In this work, we proposed a GI-based RTS smoothing algorithm that computes the mean and covariance of system states using the odd-even properties of Gaussian integrals. The algorithm performs forward and backward filtering, with the backward process refining the forward estimates. Unlike traditional smoothing algorithms that rely on approximations for intractable integrals, the GI-based smoother provides an exact solution for the integral of polynomial functions over a Gaussian PDF, leading to more accurate results. This is demonstrated through the implementation of the proposed smoother to the Van der Pol oscillator, where the proposed algorithm shows improved accuracy compared to state-of-the-art smoothers like CRTSS, URTSS, and ERTSS. 
%$Additionally, the proposed GI smoother is applicable to general smoothing problems involving non-linear polynomial process and measurement models.      
\newpage

% Generated by IEEEtran.bst, version: 1.14 (2015/08/26)


\begin{thebibliography}{10}
	\balance
	\providecommand{\url}[1]{#1}
	\csname url@samestyle\endcsname
	\providecommand{\newblock}{\relax}
	\providecommand{\bibinfo}[2]{#2}
	\providecommand{\BIBentrySTDinterwordspacing}{\spaceskip=0pt\relax}
	\providecommand{\BIBentryALTinterwordstretchfactor}{4}
	\providecommand{\BIBentryALTinterwordspacing}{\spaceskip=\fontdimen2\font plus
		\BIBentryALTinterwordstretchfactor\fontdimen3\font minus
		\fontdimen4\font\relax}
	\providecommand{\BIBforeignlanguage}[2]{{%
			\expandafter\ifx\csname l@#1\endcsname\relax
			\typeout{** WARNING: IEEEtran.bst: No hyphenation pattern has been}%
			\typeout{** loaded for the language `#1'. Using the pattern for}%
			\typeout{** the default language instead.}%
			\else
			\language=\csname l@#1\endcsname
			\fi
			#2}}
	\providecommand{\BIBdecl}{\relax}
	\BIBdecl
	
	\bibitem{xia2022multiple}
	Y.~Xia, L.~Svensson, {\'A}.~F. Garc{\'\i}a~Fern{\'a}ndez, J.~L. Williams,
	D.~Svensson, and K.~Granstr{\"o}m, ``Multiple object trajectory estimation
	using backward simulation,'' \emph{IEEE Transactions on Signal Processing},
	vol.~70, pp. 3249--3263, 2022.
	
	\bibitem{cao2022gvins}
	S.~Cao, X.~Lu, and S.~Shen, ``{GVINS}: Tightly coupled {GNSS}--visual--inertial
	fusion for smooth and consistent state estimation,'' \emph{IEEE Transactions
		on Robotics}, vol.~38, no.~4, pp. 2004--2021, 2022.
	
	\bibitem{garcia2019rao}
	{\'A}.~F. Garc{\'\i}a~Fern{\'a}ndez, R.~Hostettler, and S.~S{\"a}rkk{\"a},
	``Rao-{B}lackwellized posterior linearization backward {SLAM},'' \emph{IEEE
		Transactions on Vehicular Technology}, vol.~68, no.~5, pp. 4734--4747, 2019.
	
	\bibitem{ritter2021data}
	C.~Ritter, T.~Wollmann, J.~Y. Lee, A.~Imle, B.~M{\"u}ller, O.~T. Fackler,
	R.~Bartenschlager, and K.~Rohr, ``Data fusion and smoothing for probabilistic
	tracking of viral structures in fluorescence microscopy images,''
	\emph{Medical Image Analysis}, vol.~73, p. 102168, 2021.
	
	\bibitem{munezero2022efficient}
	P.~Munezero, ``Efficient particle smoothing for {B}ayesian inference in dynamic
	survival models,'' \emph{Computational Statistics}, vol.~37, no.~2, pp.
	975--994, 2022.
	
	\bibitem{zhang2023sequential}
	J.~Zhang, Q.~Zhang, Y.~Li, and Q.~Wang, ``Sequential {B}ayesian inference for
	agent-based models with application to the {C}hinese business cycle,''
	\emph{Economic Modelling}, vol. 126, p. 106381, 2023.
	
	\bibitem{sarkka2023bayesian}
	S.~S{\"a}rkk{\"a} and L.~Svensson, \emph{Bayesian filtering and
		smoothing}.\hskip 1em plus 0.5em minus 0.4em\relax Cambridge University
	Press, 2023, vol.~17.
	
	\bibitem{bar2004estimation}
	Y.~Bar~Shalom, X.~R. Li, and T.~Kirubarajan, \emph{Estimation with applications
		to tracking and navigation: theory algorithms and software}.\hskip 1em plus
	0.5em minus 0.4em\relax John Wiley \& Sons, 2004.
	
	\bibitem{rauch1965maximum}
	H.~E. Rauch, F.~Tung, and C.~T. Striebel, ``Maximum likelihood estimates of
	linear dynamic systems,'' \emph{AIAA journal}, vol.~3, no.~8, pp. 1445--1450,
	1965.
	
	\bibitem{ito2000gaussian}
	K.~Ito and K.~Xiong, ``Gaussian filters for nonlinear filtering problems,''
	\emph{IEEE Transactions on {A}utomatic {C}ontrol}, vol.~45, no.~5, pp.
	910--927, 2000.
	
	\bibitem{wan2000unscented}
	E.~A. Wan and R.~Van Der~Merwe, ``The unscented {K}alman filter for nonlinear
	estimation,'' in \emph{Proceedings of the IEEE 2000 Adaptive Systems for
		Signal Processing, Communications, and Control Symposium}.\hskip 1em plus
	0.5em minus 0.4em\relax IEEE, 2000, pp. 153--158.
	
	\bibitem{sarkka2008unscented}
	S.~S{\"a}rkk{\"a}, ``Unscented {R}auch--{T}ung--{S}triebel smoother,''
	\emph{IEEE transactions on {A}utomatic {C}ontrol}, vol.~53, no.~3, pp.
	845--849, 2008.
	
	\bibitem{arasaratnam2009cubature}
	I.~Arasaratnam and S.~Haykin, ``Cubature {K}alman filters,'' \emph{IEEE
		Transactions on {A}utomatic {C}ontrol}, vol.~54, no.~6, pp. 1254--1269, 2009.
	
	\bibitem{bolviken2001monte}
	E.~B{\o}lviken, P.~J. Acklam, N.~Christophersen, and J.~M. St{\o}rdal, ``Monte
	{C}arlo filters for non-linear state estimation,'' \emph{Automatica},
	vol.~37, no.~2, pp. 177--183, 2001.
	
	\bibitem{leondes1970nonlinear}
	C.~T. Leondes, J.~B. Peller, and E.~B. Stear, ``Nonlinear smoothing theory,''
	\emph{IEEE Transactions on Systems Science and Cybernetics}, vol.~6, no.~1,
	pp. 63--71, 1970.
	
	\bibitem{arasaratnam2011cubature}
	I.~Arasaratnam and S.~Haykin, ``Cubature {Kalman} smoothers,''
	\emph{Automatica}, vol.~47, no.~10, pp. 2245--2250, 2011.
	
	\bibitem{sarkka2010gaussian}
	S.~Sarkka and J.~Hartikainen, ``On {G}aussian optimal smoothing of non-linear
	state space models,'' \emph{IEEE Transactions on {A}utomatic {C}ontrol},
	vol.~55, no.~8, pp. 1938--1941, 2010.
	
	\bibitem{sarmavuori2012fourier}
	J.~Sarmavuori and S.~S{\"a}rkk{\"a}, ``Fourier-{H}ermite
	{R}auch-{T}ung-{S}triebel smoother,'' in \emph{2012 Proceedings of the 20th
		European Signal Processing Conference (EUSIPCO)}.\hskip 1em plus 0.5em minus
	0.4em\relax IEEE, 2012, pp. 2109--2113.
	
	\bibitem{kumar2023new}
	K.~Kumar, S.~Das, and S.~Bhaumik, ``A new method for nonlinear state estimation
	problem,'' \emph{Digital Signal Processing}, vol. 132, p. 103788, 2023.
	
	\bibitem{fitzhugh1961impulses}
	R.~FitzHugh, ``Impulses and physiological states in theoretical models of nerve
	membrane,'' \emph{Biophysical {J}ournal}, vol.~1, no.~6, pp. 445--466, 1961.
	
	\bibitem{onat2019novel}
	A.~Onat, ``A novel and computationally efficient joint unscented {K}alman
	filtering scheme for parameter estimation of a class of nonlinear systems,''
	\emph{IEEE Access}, vol.~7, pp. 31\,634--31\,655, 2019.
	
	\bibitem{yamalakonda2023oscillatory}
	V.~G. Yamalakonda, G.~Kumar, R.~B. Pachori, and A.~K. Singh, ``Oscillatory
	{K}alman filtering for {D}uffing, {C}oulomb, and {V}an der {P}ol
	oscillators,'' \emph{Signal Processing}, vol. 211, p. 109091, 2023.
	
	\bibitem{dunik2013stochastic}
	J.~Dun{\'\i}k, O.~Straka, and M.~{\v{S}}imandl, ``Stochastic integration
	filter,'' \emph{IEEE Transactions on {A}utomatic {C}ontrol}, vol.~58, no.~6,
	pp. 1561--1566, 2013.
	
	\bibitem{straub2009brief}
	W.~O. Straub, ``A brief look at {G}aussian integrals,'' \emph{Article: Pasadena
		California}, 2009.
	
\end{thebibliography}
\end{document}